\documentclass{article}

\usepackage{amsmath,amsthm,amssymb,mathtools}
\usepackage{graphicx}
\usepackage{booktabs}
\usepackage[numbers]{natbib}
\usepackage{hyperref}
\usepackage[capitalise]{cleveref}
\crefname{equation}{Eq.}{Eqs.}
\Crefname{equation}{Eq.}{Eqs.}

\theoremstyle{plain}
\newtheorem{lemma}{Lemma}[section]
\theoremstyle{remark}

\renewcommand{\Re}{\operatorname{Re}}
\newcommand{\E}{\mathbb{E}}
\newcommand{\C}{\mathbb{C}}
\newcommand{\Strig}{S_{\mathrm{trig}}}
\newcommand{\Snorm}{S_{\mathrm{norm}}}
\newcommand{\abs}[1]{\lVert #1 \rVert}

\title{\large Precomputing the Future-Offset Average in TriAttention}
\author{%
  \normalsize Amarnath Mukherjee\thanks{Correspondence: \texttt{am [at] hozhoke [dot] com}} \\
  \normalsize Hozhoke, Inc.
}
\date{\normalsize June 10, 2026 \\ Last revised: June 29, 2026}

\begin{document}
\maketitle

\begin{abstract}
TriAttention is a recent method for shrinking the KV cache of long-reasoning
LLMs: it scores each cached key by how much attention it is likely to receive and
evicts the lowest-scoring ones. Because a key does not know how far away its
future queries will sit, the score is averaged over a ladder of $17$ possible
future distances. We point out that this average is free: the future distance
enters the score only through the position-dependent rotation, so the whole
$17$-fold average collapses---exactly, by a one-line algebraic identity---into a
single per-band weight that is computed once, offline. Scoring a key then costs
one evaluation instead of seventeen, with no change to which keys get pruned. The
saving is modest and lives entirely in TriAttention's pruning-score computation,
not in the attention kernel; we present it as a small, exact complement to their
method, and we confirm the identity numerically.
\end{abstract}

\section{Introduction}
\label{sec:intro}

In long-context LLMs the KV cache grows large, and to stay within a memory budget
it must periodically be pruned. Which entries to evict is a consequential choice:
drop a key the model still needs and its accuracy suffers.

TriAttention~\citep{triattention2026} is a recent and elegant answer to that
choice. When the cache must be trimmed, it scores each cached key by how much
attention it is likely to receive, and keeps the highest-scoring ones.

This paper contributes a small improvement to one step of that method. We first
describe the part of TriAttention our result touches, then the improvement itself.

\section{The TriAttention score}
\label{sec:setup}

TriAttention's score rests on a striking regularity in how trained models
represent queries and keys.

\paragraph{Their empirical observation.}
Across several leading-edge models (Qwen3, Qwen2.5, Llama3), and measured
\emph{before} the RoPE position-dependent rotation~\citep{rope} is applied, a head's
query and key vectors do not spread out. They cluster tightly around a fixed,
non-zero center, and that center barely moves across positions or inputs. Roughly
$90\%$ of heads show this, and it is stable across task domains. The consequence
matters asymmetrically for what comes next. At pruning time a cached key is
already known exactly; the only thing that must be approximated is the
\emph{future} query that will read it---and because queries concentrate, their
center is a faithful stand-in for it. With the query pinned to that center, the
only thing still varying in the attention logit is the rotation, i.e.\ the
query--key distance $\Delta$. Attention as a function of distance becomes a fixed,
predictable curve you can compute once, offline---no need to watch live attention
scores. This is why TriAttention scores keys in the pre-rotation space: it is where
the query center is stable enough to be a reliable summary.

\paragraph{TriAttention Score.}
At pruning time the keys are concrete (they sit in the cache), but the queries
that will read them are in the future and do not exist yet. So TriAttention
substitutes the calibrated query center $\E[q_f]$---the average of band-$f$
queries over a calibration set---for the unknown future query and
predicts the attention a key $k$ would receive at distance $\Delta$
(\citealp{triattention2026}, Eq.~10):
\begin{equation}
  S(k,\Delta)=\Strig(k,\Delta)+\Snorm(k),
  \label{eq:score}
\end{equation}
whose trigonometric term (\citealp{triattention2026}, Eq.~6) collects the
per-band contributions,
\begin{equation}
  \Strig(k,\Delta)=\sum_f \abs{\E[q_f]}\,\abs{k_f}\cos(\omega_f\Delta+\varphi_f),
  \label{eq:strig}
\end{equation}
where:
\begin{itemize}
  \item $f\in\{0,\dots,d/2-1\}$ indexes the rotation's two-dimensional bands,
        which TriAttention calls \emph{frequency bands};
  \item $\omega_f=\theta^{-2f/d}$ is band $f$'s \emph{angular} frequency---the
        rate at which the rotation turns that band;
  \item $\varphi_f=\arg\E[q_f]-\arg k_f$ is the angle between the query center
        and the key's band-$f$ component $k_f$;
  \item $d$ is the head dimension; and
  \item $\theta=10000$ is the rotary base TriAttention uses---the value RoPE
        adopted from the sinusoidal positional encoding of the original
        Transformer~\citep{vaswani2017}.
\end{itemize}
The second term $\Snorm(k)$ is a norm-based correction that
does not depend on $\Delta$; it is mentioned for completeness,
but is not relevant to our discussion in this paper.

\begin{figure}[t]
  \centering
  \includegraphics[width=\textwidth]{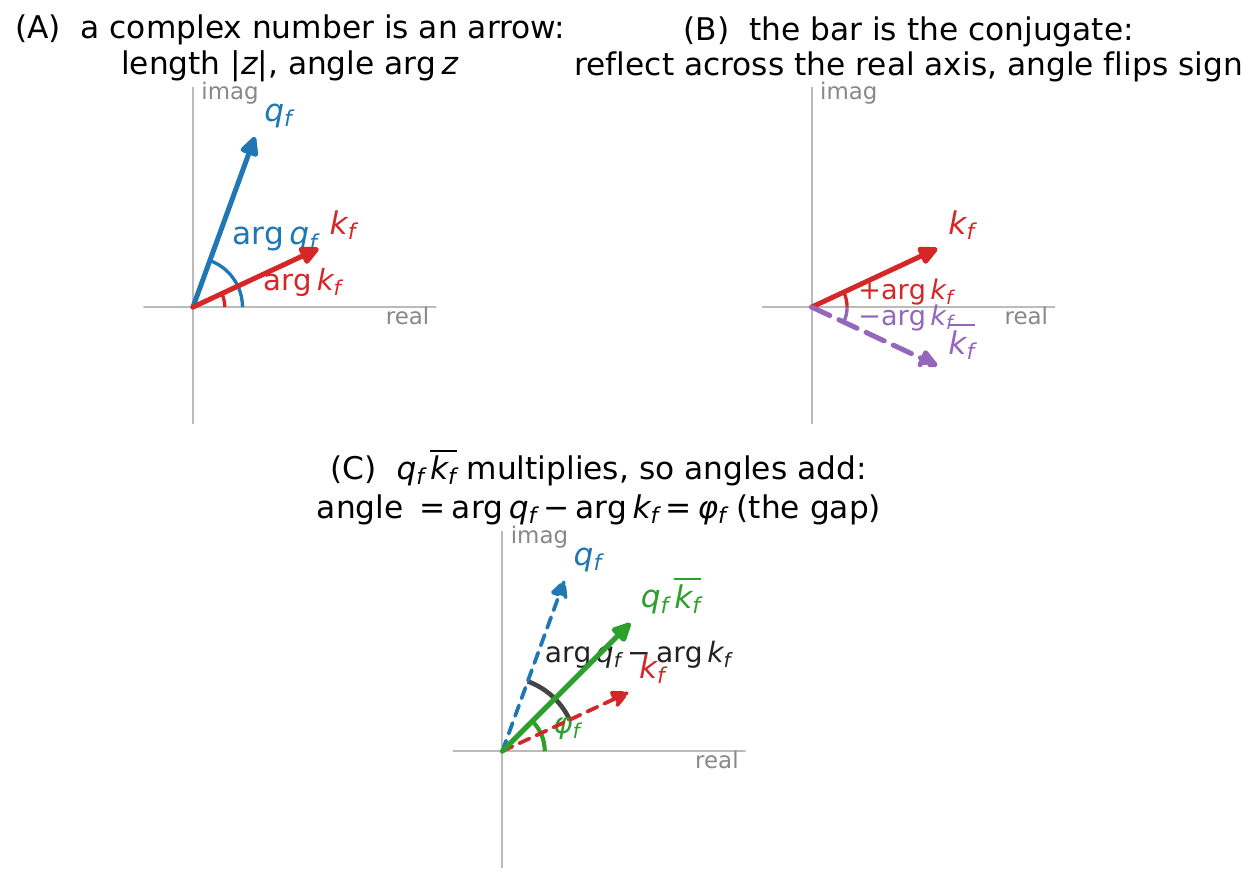}
  \vspace{8pt}
  \caption{Reading band $f$ as arrows in the plane. \textbf{(A)}~A complex number
  is an arrow with a length and an angle; $q_f$ and $k_f$ are the pre-rotation
  query and key in this band. \textbf{(B)}~The conjugate $\overline{k_f}$ is the
  mirror image of $k_f$ across the real axis---it has the same length, but its angle flips sign.
  \textbf{(C)}~Because multiplying complex numbers adds their angles,
  $q_f\,\overline{k_f}$ is one arrow carrying the angle \emph{between} query and
  key, $\arg q_f-\arg k_f$. Multiplying further by
  $e^{\,i\omega_f\Delta}$ turns it by the distance angle, and its real part is the
  cosine term of the score.}
  \label{fig:phase}
\end{figure}

\paragraph{The score as a picture in the plane.}

The rotation acts on two coordinates of the query and key at a time, turning that
pair by an angle. A pair of coordinates is a point in the plane, and a point in
the plane is a complex number: an arrow with a length and an angle. So read band
$f$'s pre-rotation query and key as complex numbers $q_f,k_f\in\C$
(\cref{fig:phase}A). Three small facts are all we need.

\begin{enumerate}

\item The bar flips the angle. The conjugate $\overline{k_f}$ is the
mirror image of $k_f$ across the real axis: same length, opposite angle
(\cref{fig:phase}B).
\item Multiplying adds angles. So the product
$q_f\,\overline{k_f}$ is a single arrow whose angle is the query--key gap
$\arg q_f-\arg k_f$, and whose length is $\abs{q_f}\abs{k_f}$ (\cref{fig:phase}C);
with the center in the query's place this gap is $\varphi_f$ (\cref{eq:strig}).
\item The
position-dependent rotation is itself just a multiplication: turning band $f$ by
the distance angle $\omega_f\Delta$ means multiplying by $e^{\,i\omega_f\Delta}$.

\end{enumerate}

Put the three together, now with the score's center $\E[q_f]$ in place of the
unknown query. The arrow $\E[q_f]\,\overline{k_f}\,e^{\,i\omega_f\Delta}$ has length
$\abs{\E[q_f]}\abs{k_f}$ and angle $\varphi_f+\omega_f\Delta$, and its real part is
exactly $\abs{\E[q_f]}\abs{k_f}\cos(\omega_f\Delta+\varphi_f)$---one term of
$\Strig$. Summing over bands, the trigonometric part of the score,
\cref{eq:strig}, is the real part of a sum of arrows---the same complex form
RoFormer gives the rotary dot product (\citealp[Eq.~12]{rope}):
\begin{equation}
  \Strig(k,\Delta)=\Re\!\Big\{\sum_f \E[q_f]\,\overline{k_f}\,e^{\,i\omega_f\Delta}\Big\}.
  \label{eq:phasor}
\end{equation}

\section{The observation}
\label{sec:result}

\paragraph{The averaging step---the one we improve.}
A cached key might be queried soon or much later. Its real importance is
its importance across all those future distances, so TriAttention averages the
score over a set of future offsets
$D=\{1,2,4,\dots,2^{16}\}$ (\citealp{triattention2026}, Eq.~11):
\begin{equation}
  \tilde S(k)=\frac{1}{|D|}\sum_{\delta\in D} S(k,\Delta+\delta),
  \qquad |D|=17.
  \label{eq:avg}
\end{equation}
The offsets are spaced geometrically (powers of two) because attention changes
fast at short distances and slowly at long ones. Done as written,
\cref{eq:avg} evaluates the whole band sum $17$ times per key and averages---and
that is exactly what the reference implementation does, materializing a
[keys $\times$ offsets $\times$ bands] table of cosines and averaging over the
offset axis.\footnote{Reference implementation~\citep{triattentioncode}:
\texttt{triattention/methods/pruning\_utils.py::score\_keys\_for\_round}, the
default mean-aggregation path.}

Now, consider where the offset $\delta$ goes. In the phasor score, \cref{eq:phasor},
it appears only inside the rotation, and---because adding angles multiplies
rotations---a turn by the shifted distance factors into the part we keep and the
part we average:
\begin{equation}
  e^{\,i\omega_f(\Delta+\delta)}
   =\underbrace{e^{\,i\omega_f\Delta}}_{\text{distance, kept}}\;
    \underbrace{e^{\,i\omega_f\delta}}_{\text{offset, averaged}}.
  \label{eq:split}
\end{equation}
Only the offset factor depends on $\delta$, and it touches neither the key nor the
center. So averaging the phasor score over the $17$ offsets
reaches only that factor, where it collapses to one complex weight per band:
\begin{equation}
  \tilde S(k)=\Re\!\Big\{\sum_f \E[q_f]\,\overline{k_f}\,e^{\,i\omega_f\Delta}\,
   \underbrace{\tfrac{1}{|D|}\sum_{\delta\in D}e^{\,i\omega_f\delta}}_{\textstyle W_f}\Big\}+\Snorm(k).
  \label{eq:collapse}
\end{equation}
The averaged offset rotation $W_f=\frac{1}{|D|}\sum_{\delta\in D}e^{\,i\omega_f\delta}$
(\cref{fig:Wf}) depends only on the frequencies
$\{\omega_f\}$ and the fixed offset set, so it is computed once, offline, alongside
the centers.

With $W_f$ in hand, the averaged score is a \emph{single} evaluation of the original
score, \cref{eq:score}, with the query center $\E[q_f]$ nudged to a precomputed
``offset-averaged center''\footnote{If a band has $\abs{\mu_f'}=0$ or
$\abs{k_f}=0$, its phase is immaterial---the band contributes zero, and the phasor
form~\eqref{eq:collapse} is the canonical statement.} $\mu_f'=\E[q_f]\,W_f$, with
$\varphi_f'=\arg\mu_f'-\arg k_f$:
\begin{equation}
  \boxed{\;\tilde S(k)=\sum_f \abs{\mu_f'}\,\abs{k_f}\cos(\omega_f\Delta+\varphi_f')+\Snorm(k),
  \qquad \mu_f'=\E[q_f]\,W_f.\;}
  \label{eq:folded}
\end{equation}
After that, scoring a key costs one band-sum evaluation instead of $17$,
and---because \cref{eq:folded} is an exact rewrite of \cref{eq:avg}, not an
approximation---every key receives precisely the score it had before. The pruned
set is unchanged. The complex weights $W_f$ need not appear in the scoring loop:
folded offline into the centers (or the real coefficients), they leave the
computation on the real cosine/sine form TriAttention already uses;
\cref{app:real} makes that concrete as a fixed $2\times2$ per-band map and
reports a numerical check.

\begin{figure}[t]
  \centering
  \includegraphics[width=0.92\textwidth]{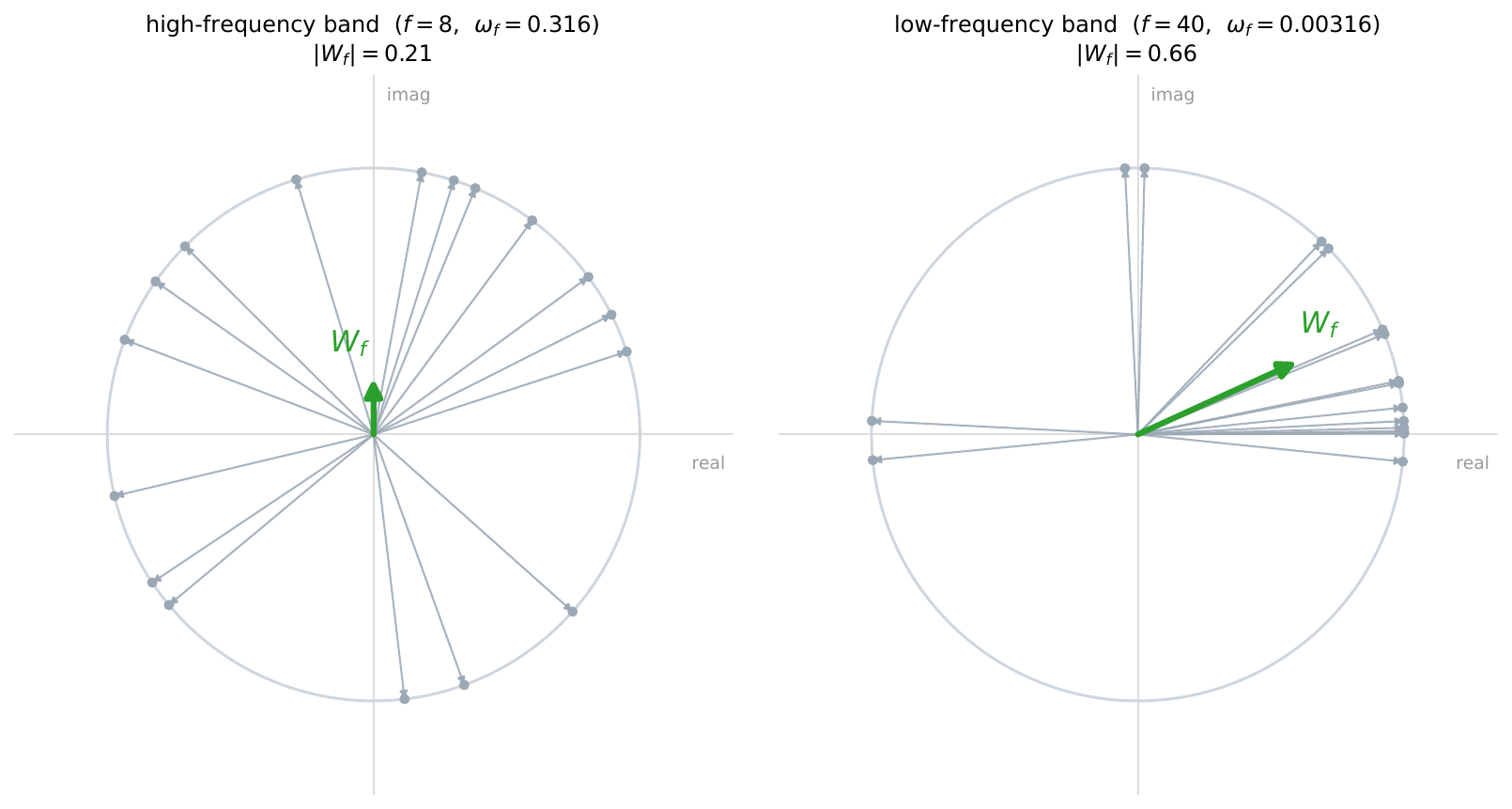}
  \caption{The $17$-fold offset average is an average of rotations. Each offset
  $\delta\in D$ contributes a unit arrow $e^{\,i\omega_f\delta}$ on the circle;
  their average is the single weight $W_f$ (green). In a high-frequency band
  (left) the offsets fan out and largely cancel, so $W_f$ is short; in a
  low-frequency band (right) they cluster and reinforce, so $W_f$ is long. Either
  way it is one number per band, computed once.}
  \label{fig:Wf}
\end{figure}

\section{What it saves, and what it does not}
\label{sec:cost}

For one pruning event over $N$ cached keys and $F$ bands, the trigonometric term
drops from $O(N F\,|D|)$ band-cosine evaluations to $O(N F)$, plus a one-time
offline cost of $O(F\,|D|)$ to build the weights $\{W_f\}$. The reduction on the first
term is an exact factor of $|D|=17$.

It is worth being clear about the size and place of this. The saving is in the
\emph{scoring arithmetic of the pruning step}, which TriAttention already runs
only once per window of $\beta=128$ generated tokens. It is not a saving in the
attention kernel that runs every token, nor in the cache budget, nor in how often
pruning happens. It composes with, and does not replace, the method's other
engineering: the window batching that decides how \emph{often} scoring runs, and
the grouped-query normalize-then-max step that consumes the per-head
$\tilde S(k)$ unchanged. This rewrite is exact for the average in \cref{eq:avg}, which
is the published method and the default in the code; it relies on the average
being linear, so it does \emph{not} apply to a non-default variant that takes a
maximum over offsets instead of a mean. In short: a key step that looked like it
cost $17\times$ costs $1\times$, paid for once, offline, without changing any output---a
small, exact tightening of one bolt in their machine.

\section{Conclusion}
\label{sec:conclusion}

TriAttention's future-offset average admits an offline computation: because the offset lives
entirely in the position-dependent rotation, it folds exactly into a per-band
weight that is precomputed offline, turning a $17$-fold loop into a single
evaluation, with no change to which keys are kept. We offer it as a small, exact
complement to a method we expect to see more of, and as a reminder that ``average
over a geometric ladder of rotations'' is often a precompute in disguise.

\appendix
\crefalias{section}{appendix}  
\section{Derivation}
\label{app:derivation}

\setlength{\abovedisplayskip}{14pt plus 3pt minus 2pt}
\setlength{\belowdisplayskip}{14pt plus 3pt minus 2pt}
\setlength{\abovedisplayshortskip}{8pt plus 3pt}
\setlength{\belowdisplayshortskip}{12pt plus 3pt minus 2pt}

The rewrite used in the body is exact; here is the one line behind it.

\begin{lemma}[Offset separability]
\label{lem:fold}
For the mean future-offset score in \cref{eq:avg}, the collapse in \cref{eq:collapse}---and
its polar form in \cref{eq:folded}, with $\mu_f'=\E[q_f]\,W_f$ and
$\varphi_f'=\arg\mu_f'-\arg k_f$---holds exactly.
\end{lemma}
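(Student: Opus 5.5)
The argument is a chain of linear rewrites. First, substitute \cref{eq:score} into \cref{eq:avg}. The norm term $\Snorm(k)$ does not depend on the offset, so it averages to itself. That leaves the $\delta$-average of $\Strig(k,\Delta+\delta)$. Using the phasor form \cref{eq:phasor}, which was established in \cref{sec:setup} from the three elementary facts about conjugation, multiplication and rotation, I would write
\[
\frac{1}{|D|}\sum_{\delta\in D}\Strig(k,\Delta+\delta)=\frac{1}{|D|}\sum_{\delta\in D}\Re\Big\{\sum_f \E[q_f]\,\overline{k_f}\,e^{\,i\omega_f(\Delta+\delta)}\Big\}.
\]
The key observation is that $\Re$ is $\mathbb{R}$-linear and the weight $1/|D|$ is real. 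So both the finite average and the finite band sum commute with taking real parts, and the sums over $\delta$ and $f$ can be interchanged.

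Second, apply the factorization \cref{eq:split}, $e^{i\omega_f(\Delta+\delta)}=e^{i\omega_f\Delta}e^{i\omega_f\delta}$. Inside each band $f$, the factor $\E[q_f]\,\overline{k_f}\,e^{i\omega_f\Delta}$ does not depend on $\delta$, so it comes out of the $\delta$-sum. What remains is exactly $W_f=\frac{1}{|D|}\sum_{\delta\in D}e^{i\omega_f\delta}$. Adding $\Snorm(k)$ back gives \cref{eq:collapse} verbatim. This is the whole content of the collapse, and it uses only linearity, which is why it fails for the max variant.

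Third, derive the polar form \cref{eq:folded}. Regroup each band term as $\mu_f'\,\overline{k_f}\,e^{i\omega_f\Delta}$ with $\mu_f'=\E[q_f]W_f$, using commutativity of complex multiplication. Then rerun the argument of \cref{sec:setup} with $\mu_f'$ in place of $\E[q_f]$. Write $\mu_f'=\abs{\mu_f'}e^{i\arg\mu_f'}$ and $\overline{k_f}=\abs{k_f}e^{-i\arg k_f}$. The product is then $\abs{\mu_f'}\abs{k_f}e^{i(\omega_f\Delta+\varphi_f')}$, whose real part is $\abs{\mu_f'}\abs{k_f}\cos(\omega_f\Delta+\varphi_f')$. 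Summing over $f$ and adding $\Snorm(k)$ yields \cref{eq:folded}.

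The only delicate point, and the step I expect to need care, is this polar rewrite when $\mu_f'=0$ or $k_f=0$. In that case the argument is undefined, which can happen, for example, if $W_f$ vanishes through cancellation of the unit arrows. I would handle it as the footnote indicates. In that case the band's phasor term is $0$, and the polar term is also $0$ for any choice of $\varphi_f'$, because its amplitude $\abs{\mu_f'}\abs{k_f}$ vanishes. So the identity holds with any convention for $\arg 0$. Everything else consists of finite sums and is exact, so no approximation enters and the equality holds for every key $k$ and every $\Delta$.
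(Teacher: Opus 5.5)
Your proposal is correct and follows the paper's proof essentially step for step: you substitute the phasor form \cref{eq:phasor} into the average, use $\mathbb{R}$-linearity of $\Re$ (with real weights $1/|D|$) and of the finite band sum to isolate $W_f$, note that $\Snorm(k)$ is unaffected, and read off \cref{eq:folded} from the polar decomposition of $\mu_f'\,\overline{k_f}$. Your explicit handling of the degenerate case $\mu_f'=0$ or $k_f=0$ (arbitrary $\arg 0$, zero amplitude) matches what the paper states only in a footnote.
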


\begin{proof}
Substitute the phasor score in \cref{eq:phasor} into the average in \cref{eq:avg},
and use linearity of $\Re$ and of the finite band sum:
\[
  \frac{1}{|D|}\sum_{\delta\in D}\Strig(k,\Delta+\delta)
  =\Re\!\Big\{\sum_f \E[q_f]\,\overline{k_f}\,e^{\,i\omega_f\Delta}\,
   \underbrace{\tfrac{1}{|D|}\textstyle\sum_{\delta\in D}e^{\,i\omega_f\delta}}_{W_f}\Big\}.
\]
The norm term $\Snorm(k)$ is independent of $\delta$, so the average leaves it
unchanged; the polar form follows from the polar decomposition of
$\mu_f'\overline{k_f}$.
\end{proof}

\section{Real-coefficient form}
\label{app:real}

\begingroup\linespread{1.12}\selectfont
No complex arithmetic is required at inference. TriAttention writes the
single-distance score with constant real coefficients
(\citealp{triattention2026}, Eqs.~17--19),
\[
  \Strig(k,\Delta)=\sum_f\big[a_f\cos(\omega_f\Delta)+b_f\sin(\omega_f\Delta)\big],
\]
with $a_f=\abs{\E[q_f]}\abs{k_f}\cos\varphi_f$ and
$b_f=-\abs{\E[q_f]}\abs{k_f}\sin\varphi_f$. In these coordinates,
\cref{eq:collapse} is a fixed $2\times2$ per-band linear map, with
$A_f=\Re W_f$ and $B_f=\Im W_f$:
\begin{equation}
\begin{aligned}
  \begin{pmatrix}a_f'\\ b_f'\end{pmatrix}
  &=\begin{pmatrix}A_f & B_f\\ -B_f & A_f\end{pmatrix}
   \begin{pmatrix}a_f\\ b_f\end{pmatrix},\\[4pt]
  \tilde\Strig(k)&=\sum_f\big[a_f'\cos(\omega_f\Delta)+b_f'\sin(\omega_f\Delta)\big].
\end{aligned}
  \label{eq:realmap}
\end{equation}
\leavevmode\\[9pt]
The existing real-valued scoring form runs unchanged on the pre-mapped coefficients;
complex numbers appear only in this derivation. A per-band, offset-independent
rescaling that the code already applies (a high-frequency mask) multiplies
$\abs{\mu_f'}$ and is unaffected.
\par\endgroup

\paragraph{Numerical confirmation.}
We checked \cref{eq:folded,eq:realmap} by comparing the literal $17$-fold offset
loop (a faithful reimplementation of the reference scorer) against the
complex, polar, and real-coefficient forms on synthetic centers and keys. All three
match the baseline to machine precision ($\max|\Delta|\!\sim\!10^{-12}$) across
$\delta_{\max}\in\{2^7,2^{12},2^{16}\}$.

\bibliographystyle{plainnat}
\bibliography{references}

\end{document}